\let\newfloat\newfloat@ltx
\algrenewcommand{\algorithmiccomment}[1]{\hskip2pt\textbackslash\textbackslash \ #1}
\def\bra#1{\mathinner{\langle{#1}|}}
\def\ket#1{\mathinner{|{#1}\rangle}}
\renewcommand{\part}[2]{\frac{\partial #1}{\partial #2}}
\newcommand{\minus}{  \scalebox{0.45}[1.0]{\( - \)}  }
\newtheorem{re}{Remark}
\newtheorem{prop}{Proposition}
\newtheorem{claim}{Claim}
\begin{document}

\title{Scoring Anomalous Vertices Through Quantum Walks}

\author{Andrew Vlasic}
\email{avlasic@deloitte.com}
\affiliation{Deloitte Consulting, LLP, Chicago, IL}

\author{Anh Pham}
\email{anhdpham@deloitte.com}
\affiliation{Deloitte Consulting, LLP, Atlanta, GA}

\begin{abstract}
With the constant flow of data from vast sources over the past decades, a plethora of advanced analytical techniques have been developed to extract relevant information from different data types ranging from labeled data, quasi-labeled data, and data with no labels known a priori. For data with at best quasi-labels, graphs are a natural representation of these data types and have important applications in many industries and scientific disciplines. Specifically, for unlabeled data, anomaly detection on graphs is a method to determine which data points do not posses the latent characteristics that is present in most other data. There have been a variety of classical methods to compute an anomalous score for the individual vertices of a respected graph, such as checking the local topology of a node,random walks, and complex neural networks. Leveraging the structure of the graph, we propose a first quantum algorithm to calculate the anomaly score of each node by continuously traversing the graph with a uniform starting position of all nodes. The proposed algorithm incorporates well-known characteristics of quantum random walks, and, taking into consideration the NISQ era and subsequent ISQ era, an adjustment to the algorithm is given to mitigate the increasing depth of the circuit. This algorithm is rigorously shown to converge to the expected probability, with respect to the initial condition. 
\end{abstract}

\keywords{Quantum Walk, Continuous Walk, Coined Walk, Anomaly Scoring}
\date{\today}

\maketitle

\section{Introduction}\label{sec:intro}
We propose a simple, yet, elegant methods to calculate an anomaly score for the individual nodes within a graph structure.  %how anomalous a node is with respect to the structure of the graph%. 
To the author's knowledge, this is the first quantum based algorithm to output a score for the individual nodes within a graph, yielding further insight into the structure of the graph, and the relationship between different nodes. For our experiments, we considered weighted undirected and directed graphs without self-loops. The techniques are rigorously sound, using the basis of mixing, established by Aharonov et al. \cite{aharonov2001quantum} as the quantum counterpart to Markovian random walks, to exhibit convergence of the algorithm. Furthermore, to mitigate the depth of the quantum circuit, we derive an algorithm that modularizes the number of iterations of the walk operator, which will be demonstrated in one of our examples to demonstrate the reduction of circuit depth using our modified version of the regular continuous time random walk (CTQW). This algorithm is rigorously shown to converge to the same state is the algorithm without modularization. An explicit method is given for graphs whose adjacency matrix is not Hermitian, and we discuss how to incorporate discrete time quantum walks of weighted graphs and display a technique to adjust for case when the walk operator is Hermitian but not unitary.

The general meaning of an anomalous node is a node which significantly differs compared to an expected node, as to arouse suspicion, given in Hawkins \cite{hawkins1980identification}. Observe that the `expected node' is particular to the context of the data and the nature of dynamic that generated the data. Essentially, anomaly detection algorithms find outliers with distributions of structure and attributes \cite{pourhabibi2020fraud}, which may include the local topology of a node, characteristics of the edges, characteristics of the nodes, and characteristics of the identified subgraphs. These algorithms either explicitly leverage  the distribution of the data, such as the (potentially) invariant distributions of random walks \cite{wang2018new} or GANs \cite{chen2020generative,guo2023regraphgan}, or implicitly leverage the distribution of the data, like matrix decomposition \cite{sun2007less,peng2018anomalous}, and combinations of machine learning techniques which include vector representations of nodes \cite{grover2016node2vec}, graph attention networks \cite{velickovic2017graph}, and autoencoders \cite{du2022graph}. A more detailed survey can also be found in Xing et al. \cite{xing2024survey}. 

For this paper, we take the perspective of random walks on graphs \cite{moonesinghe2006outlier,wang2009discovering,wang2018new} consider a node to be anomalous (or an outlier) if it is not traversed as often with respect to other nodes. From this definition the structure of the graph, and not individual characteristics of nodes (such as the number of edges or triangles formed), dictates whether a vertex is an outlier. For instance, a node with a relatively small edge count will not be considered anomalous if the majority of the edges are connected to well-connected vertices as it will be visited often. 

The algorithm is based on a result in mixing that states the average of the summed average of each sequential run converges to a probability with respect to the initial condition. This means that for $n$ steps, at the $i^{th}$ step the walk in implemented $i$ times in the circuit and the output of the shots averaged and summed to the $(i-1)^{th}$ step. After a respective step, say $i_0$, the depth of the circuit is too unwieldly to implement in a general quantum processing unit (QPU) in the NISQ era and subsequent ISQ era. We adjust for the depth of the quantum circuit by implementing a run up to $i_0$ implementations of the walk operator then feed the output vector back into the circuit until the number of iterations of the walk operator sums to the given number. For instance, if $i_0=3$ and the step is 7, then the circuit is first ran with three walk operators, then output is fed back into the circuit and ran with three walk operators, and finally, the output vector is fed back into the circuit and ran with one walk operator. Using this process, we can take advantage of other data loading technique such as Grover-Rudolph \cite{grover2002creating} to load the output states for the next time step to reduce the accumulated error when walks are implemented on real hardware. In addition, the number of qubits scales logarithmically with the number of columns of the adjacency matrix, thus making this algorithm scalable for large graph. 

Within the NISQ era, continuous time quantum walks have been explored on hardware devices for simple graphs and hypercube within a few qubits with comparable results obtained from the simulator \cite{MADHU2023e13416}. In addition, noise mitigation techniques \cite{Endo_2019} have been explored for Hamiltonian simulation which is a critical step to perform the time evolution of the quantum state with the walkers. 

%many noise mitigation techniques introduce quantum operators into the circuit, for instance, including zero noise extrapolation \cite{kandala2019error,majumdar2023best} and probabilistic error cancellation \cite{van2023probabilistic,mcdonough2022automated}, and the proposed algorithm may incorporate the extra gates without sacrificing accuracy.}

While intuitive, we show this algorithm converges to the same probability as the algorithm without this modularity. In addition, given the scale and fidelity of a particular QPU, one may implement the algorithm without intermediate steps, as is displayed in the quantum walk literature on mixing. 

The structure of the paper is organized as follows. Section \ref{sec:classical} gives a brief overview of graphs, the representation of graphs as matrices, and classical random walks. Section \ref{sec:cont} gives a background of continuous time quantum walks, results from the concept of mixing, and discusses a method to implement a continuous time quantum walk on a directional graph with an asymmetric adjacency matrix. Section \ref{sec:sym-score} derives the algorithm, gives a rigorous basis for convergence, and argues why the algorithm utilizes the adjacency matrix. A small change to the algorithm is shown when the adjacency matrix is not Hermitian. For completeness, Section \ref{sec:coin} gives a brief overview of discrete quantum walks and an explicit walk operator is identified for weighted graphs. When the discrete walk operator is Hermitian but not unitary, an explicit transformation of the walk operator and how to extract information is discussed. Thus, giving a foundation to implement the algorithm with discrete quantum walks. Finally, Section \ref{sec:diff-ham} shows the results of the algorithm ran on toy graphs with symmetric and asymmetric adjacency matrices, and with different Hamiltonian generator matrices. Completing the comparison, a classical random walk is also calculated. Due to the scaling of the simulated qubits, only toy graphs were considered. Section \ref{sec:discussion} summarizes the paper and delves into potential future research.   

\section{Classical Random Walks}\label{sec:classical}
In this section we give an overview of representations of a graph, randomly walking a graph, and calculating anomaly scores for every node. The overview will establish a foundation when quantum algorithms are introduced and explored. 

A graph $G$ is denoted as the pair $(V,E)$ where $V$ is the set of vertices and $E$ is the set edges between the connected vertices. Given that whether two nodes are connected is binary, $E$ can be represented as a matrix. For $|V|=n$, each node is given a unique label $1,\ldots,n$. Then we can construct a matrix $A$ that represents $E$ where entry $a_{ij}= \left\{\begin{array}{cc}
   1  & \mbox{if } e(i,j)\in E \\
   0  & \mbox{otherwise}
\end{array} \right. .$  The matrix $A$ is called the \textbf{adjacency} matrix. Note that the labeling of the nodes does not affect the dynamic since there exists a permutation matrix $\mathcal{P}$ such that $A = \mathcal{P}^T \Tilde{A} \mathcal{P}$.

Collecting information about the immediate topology of each node, define $D$ as a diagonal matrix where entry $d_{ii} = \sum_{j} a_{ij}$. With this construction, one may see how to construct these matrices with a graph where each edge has a respective weight; each weight is assumed to be strictly positive. For more in-depth information on graphs and matrices, see \cite{bapat2010graphs}.

With the matrix representation, we may mathematically describe how information (or energy) flows through a graph. Given an orientation of a flow, define the \textbf{incidence} matrix $Q$ where $q_{ij}=1$ if $v_i$ is the 'positive end', $q_{ij}=-1$ if $v_i$ is the 'negative end', and $0$ if an edge does not exist. With the incidence matrix we have a form of the discrete Laplace operator, or Laplacian, $L=QQ^T$. Interestingly, the Laplacian is independent of the orientation, and
\begin{equation}\label{eq:lapc}
    L = D - A .
\end{equation}
See Merris \cite{merris1994laplacian} for more information.

Traversing a graph with a random walk has been studied for decades with many advancements, see \cite{lovasz1993random,aldous2002reversible,blanchard2011random,riascos2018random,riascos2021random}. However, for the purposes of this paper, only the Markov random walk with the derivation from the adjacency matrix will be covered; a random variable $\{x_t\}_{t \in I}$ has a Markov distribution if $P(x_n=j | x_{n-1}=i_{n-1},  \ldots, x_0 = i_0) = P(x_n=j | x_{n-1}=i_{n-1})$. 

Particularly, for a global traversing denote $M = (p_{ij})_{i,j \in V}$ as the matrix of probabilities where  $p_{ij}= a_{ij}/d_{ii}$. Ergo, $M=AD^{-1}$, and for current position $P_t$ the transition is $P_{t+1}=M^T P_t$. From the structure of graphs we get the existence of a stationary distribution $\pi$ such that $\pi M = \pi$. Furthermore, for any initial condition $P_0$, $\displaystyle \lim_{t \to \infty} P_t = \pi$, which makes the distribution $\pi$ invariant. The invariant measure may be approximated with the formula $\displaystyle \pi^{t+1} = \frac{ P^T \pi^{t} }{ || P^T \pi^{t} ||_1}$, with $||\cdot||_1$ as the $L^1$ norm.

If information is known a priori about the nodes, there is an augmented Markov random walk, called the maximal entropy random walk (MERW), where bias on the nodes is introduced. For the construction, take the normalized eigenvector of $A$, say $\vec{\xi}$, where for $A \vec{\xi} = \chi \vec{\xi}$, $\chi$ is the largest eigenvalue. Define $\Tilde{A}$ as the augmented adjacency matrix where define $\Tilde{a}_{ij} = a_{ij}\xi_{i}\xi_{j}$. Notice that 
\begin{equation}\label{eq:mea}
    \Tilde{A} = \mbox{Diag}\big( \vec{\xi} \big) \cdot A  \cdot \mbox{Diag}\big( \vec{\xi} \big),
\end{equation} for $\mbox{Diag}$ the function that takes a vector and creates a square diagonal matrix. We call $\Tilde{A}$ the \textbf{maximal entropy adjacency} matrix (MEA). Then the sum of the $i^{th}$ row is $\sum_{j}\xi_{i}\xi_{j}a_{ij} = \xi_{i} \sum_{j} \xi_{j}a_{ij} = \chi \xi^2_{i}$. Hence, the probability transition matrix $M$ has the form $p_{ij} = a_{ij} \frac{\xi_i \xi_j}{ \chi \xi_i^2} = a_{ij} \frac{ \xi_j}{ \chi \xi_i}$.

The MERW algorithm has been shown to be quite powerful in applications of community detection, and have been shown to unify measures of centrality, \cite{burda2009localization,sinatra2011maximal,ochab2012maximal}. 

Given the foundation of mathematically representing graphs, one may then ask questions about the characteristics. For our purposes, the question on which nodes are anomalous and to what extend are these nodes outliers. Outlier detection in graphs is an extensively researched field that spans utilizing random walks \cite{burda2009localization,chakrabarti2004autopart,sensarma2015survey,wang2018new,wang2019applying} to analyzing the structure of the graph \cite{francisquini2022community} and to leveraging the neural networks to extract information embedded in the graph \cite{zhao2022graph}. 

While not all classical methods are not directly applicable to a quantum processor, the techniques provide intuition. Particularly, the algorithm describe in Wang et al. \cite{wang2018new} that utilized the normalized adjacency matrix as the stochastic matrix to traverse the graph and scores nodes with respect of the probability of visiting the node.

\section{Continuous Time Quantum Walk}\label{sec:cont}

The area of continuous random walks on graphs (CTRW) is an extremely comprehensive subfield of quantum algorithms \cite{aharonov2001quantum,childs2002example,ahmadi2003mixing,solenov2006continuous,kendon2011perfect,bai2013quantum,wong2016laplacian,izaac2017centrality,benedetti2019continuous,wang2020experimental,tanaka2022spatial,goldsmith2023link,brown2023continuous}, established by Aharonov et al. \cite{aharonov2001quantum}. The basis for the walks is an approximation of the continuous Hamiltonian, which is typically created by the Laplacian of a graph, derived in Equation \ref{eq:lapc}. 

While CTRW has numerous applications \cite{wang2020experimental,malmi2022spatial,tanaka2022spatial,goldsmith2023link}, currently, there are no algorithms to score potential outlier nodes. For completeness, there are, however, algorithms for anomaly detection in general \cite{baker2022quantum,guo2023quantum,wang2023quantum}. 

In this section we give a thorough derivation of different Hamiltonians of a graph consisting of the Laplacian, MEA, and adjacency matrices, and the similarities and differences between them. We then derive the algorithm to anomaly score nodes with a continuous time random walk with the adjacency matrix at the core of the graph traversing. 

\subsection{Laplacian Matrix and Hamiltonian}\label{subsec:lap}
For implementation of a continuous walk in a quantum circuit, we require a Hamiltonian that captures the dynamic of the graph and a time-step to approximate the real-time evolution. For the given time-step $\gamma$, number of steps $t$, and Hermitian matrix $M$, define the Hamiltonian $H_B(\gamma,t) := \exp\left\{ \minus i\gamma t \cdot B \right\}$, where throughout the manuscript we set Planck's constant to $1$, $\hbar=1$.

For the Laplacian, $L$, we now analyze the Hamiltonian $H_L(\gamma,t)$. Observe that, for a given operator norm $\mathfrak{N}$, the difference between $H_L(\gamma,t)$ and $H_{\minus A}(t)$ is almost of the form $e^{Y+X} - e^X$, which $\mathfrak{N}\Big( e^{Y+X} - e^X \Big) \leq \mathfrak{N}(Y)e^{\mathfrak{N}(Y)} e^{\mathfrak{N}(X)}$.  However, the difference of the $-1$ scalar adds an extra term. Particularly, one may show that 
\begin{equation}\label{eq:op-norm}
\begin{split}
    & \mathfrak{N}\Big( H_L(\gamma,t) - H_{ \minus A}(\gamma,t) \Big)  \\ 
    & \leq \mathfrak{N}(-i\gamma t \cdot D)e^{\mathfrak{N}(-i\gamma t \cdot D)} e^{\mathfrak{N}(i\gamma t \cdot A)} + 2\cdot\mathfrak{N}\Big(\sin(i\gamma t \cdot A) \Big).
\end{split}
\end{equation}
Thus, the complexity of the graph dictates how close the Hamiltonian with the adjacency matrix approximates the Laplacian Hamiltonian. However, the evolution of each Hamiltonian may completely differ from each other \cite{wong2016laplacian}.

For the MEA matrix, observe by Equation \ref{eq:mea}, for all $i$ and $j$ we have $|a_{ij}\xi_{i}\xi_{j} - a_{ij}|=|\xi_{i}\xi_{j} -1 | \cdot |a_{ij}|$. Hence, the closer the eigenvector of the largest eigenvalue is to the vector of all $1's$ the closer the MEA and adjacency matrix are equivalent. This scenario would happen when all of the nodes are connected, which would imply the Laplacian is also close to the adjacency matrix by Equation \ref{eq:op-norm}. More generally, the magnitude of the eigenvector of the largest eigenvalue corresponds the the norm of the matrix $D$ in Equation \ref{eq:lapc}. Therefore, the Laplacian, adjacency, and MEA matrices are intricately connected, but differ enough to influence traversing the graph in distinguishable ways. 

\subsection{Mixing}\label{subsec:mix}
Before introducing the anomaly scoring method we establish a mathematical foundation that, given initial state, the average of the expected states converge to a distribution of states. Recall from Section \ref{sec:intro} we defined a vertex to be an outlier if it is not traversed often. 

This subsection follows the work in Aharonov et al. \cite{aharonov2001quantum}, which is the first research on quantum walks on graphs and established the foundation of quantum walks on graphs. 

Taking the standard notation of the state $\ket{\psi_0}$ as the initial state, $U$ as the unitary operator, and $\ket{\psi_t} = U^t\ket{\psi_0}$, we define $P_t(\phi|\psi_0) := |\braket{\phi}{\psi_t}|^2$ as the probability of state $\ket{\phi}$ at time $t$, and 
\begin{equation}\label{eq:avg-prob}
\displaystyle \overline{P}_{t_n}(\phi|\psi_0) := \frac{1}{t_n}\sum_{t=0}^{t_n-1}P_t(\phi|\psi_0)    
\end{equation}
as the average of state $\ket{\phi}$ over the time period $[0,t_n \minus 1]$. 

\begin{re}
Observe that Equation \ref{eq:avg-prob} is necessary since an invariant state of the process does not exist. To understand why this true we assume there does exist an invariant state $\ket{\pi}$. Then we have
\begin{equation}\label{eq:no-invar}
    \begin{split}
        \lim_{t \to \infty} &  \| \ket{\psi_{t+1}} - \ket{\psi_t} \|  =  \| U\ket{\pi} - \ket{\pi} \| \\ 
        & = \| \ket{\pi} - \ket{\pi} \| = 0,
    \end{split}
\end{equation}
where the first equality comes from the equality $\| \ket{\psi_{t+1}} - \ket{\psi_t} \| = \| U(\ket{\psi_{t}} - \ket{\psi_{t-1} } ) \| = \| \ket{\psi_{t}} - \ket{\psi_{t-1} } \|$ which holds for all $t$ since $U$ is measure preserving. Thus, we conclude that if $\ket{\pi}$ exists then all states are the same which implies $U$ is the identity operator.  
\end{re}

Theorem 3.4 in Aharonov et al. \cite{aharonov2001quantum} states that if the operator $U$ is unitary then $\displaystyle \lim_{t_n \to \infty} \overline{P}_{t_n}(\phi|\psi_0)$ exists and is unique with respect to the initial condition $\ket{\psi_0}$. The authors denote this limit as the \textbf{limiting distribution}. Define $\displaystyle \pi(\cdot| \psi_0) := \lim_{t_n \to \infty} \overline{P}_{t_n}(\cdot|\psi_0)$. 

Theorem 3.4 in Aharonov et al. yields the following proposition. 
\begin{prop}\label{prop:avg-prob}
    If $G$ is a symmetric weighted graph then the limiting distribution of the Hamiltonian of the Laplacian, adjacency, or MEA matrix exists and is unique with respect to the initial condition. 
\end{prop}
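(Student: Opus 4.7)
The plan is to reduce the proposition to a direct application of Theorem 3.4 of Aharonov et al., whose hypothesis is that the single-step operator $U$ is unitary. Since the walk operator here is $H_B(\gamma,t) = \exp(-i\gamma t\cdot B)$ for $B \in \{L, A, \tilde{A}\}$, and since $e^{-i\gamma t B}$ is unitary precisely when $B$ is Hermitian, the whole task collapses to verifying Hermiticity of each of the three candidate generators under the symmetric weighted graph assumption.

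First I would dispose of the adjacency matrix $A$: by definition of a symmetric weighted graph, $a_{ij} = a_{ji}$ and all entries are real, so $A = A^{\ast}$. Next I would handle the Laplacian $L = D - A$ from Equation \eqref{eq:lapc}: $D$ is a real diagonal matrix and hence Hermitian, so $L$ is the difference of two Hermitian matrices and therefore Hermitian. For the MEA matrix, I would appeal to Equation \eqref{eq:mea}, $\tilde{A} = \mathrm{Diag}(\vec{\xi})\cdot A \cdot \mathrm{Diag}(\vec{\xi})$; here I must first note that because $A$ is real symmetric, the spectral theorem guarantees a real eigenvector $\vec{\xi}$ associated with the largest eigenvalue $\chi$, so $\mathrm{Diag}(\vec{\xi})$ is a real diagonal (hence Hermitian) matrix. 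Then $\tilde{A}^{\ast} = \mathrm{Diag}(\vec{\xi})^{\ast} A^{\ast} \mathrm{Diag}(\vec{\xi})^{\ast} = \mathrm{Diag}(\vec{\xi}) A \mathrm{Diag}(\vec{\xi}) = \tilde{A}$.

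Having established Hermiticity in each case, I would conclude that $H_L(\gamma,t)$, $H_A(\gamma,t)$, and $H_{\tilde{A}}(\gamma,t)$ are unitary for every choice of the real parameters $\gamma$ and $t$, so each satisfies the hypothesis of Theorem 3.4 in Aharonov et al. Applying that theorem, with $U$ taken to be the one-step walk operator in each instance, yields the existence and uniqueness (with respect to the initial condition $\ket{\psi_0}$) of $\pi(\cdot\mid \psi_0) = \lim_{t_n\to\infty}\overline{P}_{t_n}(\cdot\mid\psi_0)$ in all three cases.

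I do not anticipate a genuine obstacle: the only subtlety is making sure that the eigenvector $\vec{\xi}$ used to build the MEA matrix can be chosen real, which is immediate from the spectral theorem applied to the real symmetric matrix $A$. The directed or asymmetric case (where $A \neq A^{\top}$) is specifically excluded by the symmetric weighted graph hypothesis and is handled separately later in the paper, so no Hermitization step is required here.
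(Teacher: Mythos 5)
Your proposal is correct and follows essentially the same route as the paper, which simply observes that Theorem 3.4 of Aharonov et al.\ applies once the walk operator is unitary and leaves the Hermiticity of $L$, $A$, and $\tilde{A}$ implicit. You have merely filled in those (straightforward) verifications, including the correct observation that $\vec{\xi}$ may be taken real so that the MEA matrix is real symmetric.
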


To understand the convergence of the limiting distribution, define the \textbf{mixing time} of a quantum Markov process as $M_{\epsilon} = \min\big\{ \Tilde{t} \big| \forall t \geq \Tilde{t}, \ket{\phi_0} : || \pi(\cdot|\phi_0) -  \overline{P}_{t}(\cdot|\psi_0) || < \epsilon \big\}$. The mixing time gives a rigorous definition of calculating how many time-steps it takes to be close to the limiting distribution. Given infinite possible number of initial conditions $\ket{\psi_0}$, we take $\ket{\psi_0}$ to be a basis state. 

Adjusting for the global perspective of the mixing time, we will define the \textbf{sampling time}, given by $S_{\epsilon}$, to give a point-wise calculation to be close to a subset of $V$. Ergo, $S_{\epsilon} = \min\big\{ \Tilde{t} \big| \forall t \geq \Tilde{t}, \Tilde{t}, \ket{\phi_0}, X \subseteq V : | \pi(X|\phi_0) - \overline{P}_{t}(X|\psi_0)| < \epsilon \pi(X|\phi_0) \big\}$.

\section{Symmetric Graph Anomaly Scoring}\label{sec:sym-score}

With the given rigorous foundations given with Equation \ref{eq:avg-prob} and Proposition \ref{prop:avg-prob} we are now ready to describe the outlier scoring algorithm. At a high-level, the algorithm traverses the graph with respect to Equation \ref{eq:avg-prob} to identify well-traversed vertices, and after a sufficient number of runs the multiplicative inverse of the probability of a state is then its anomaly score. Granular description and logic of the derivation is given within this subsection. 

For the CTRW to score nodes on being an outlier, we posit the adjacency matrix invokes a higher entropy over the Laplacian, since the energy flow of the graph is perfectly described the Laplacian, ensuring the entire graph is consistently traversed. However, the adjacency matrix is not too biased, unlike the MEA that would identify the connected subgroup of the graph that is central to traversing, but may put over emphasis on this subset dictated by this centrality \cite{ochab2012maximal}. 

Observe that, as noted in Wong and Lockhart \cite{wong2021equivalent}, the quantum walk with the Laplacian or adjacency matrix is the same when a graph is regular (when each node has the same number of neighbors). Wong and Lockhart further study irregular graphs and show when the initial condition is localized at a certain vertex, the distributions of walks converge to the same probability. 

The centrality from the adjacency matrix is described by Wang et al. \cite{wang2022continuous}. This is further emphasized by Brown et al. \cite{brown2023continuous} as the authors consider the Hamiltonian of the adjacency matrix for the evolutionary operator of the CTRW and show relationships of state transfer between to two disjoint subsets of vertices.

\begin{algorithm}
\textbf{Input:} Symmetric adjacency matrix $A$ of size $2^n \times 2^n$, time-step $\gamma$, number of steps $t$, number of walks $w$, and empty array $A_r$ of size $T \times 2^n$
\begin{algorithmic}[1]
\State{$U \gets \exp\{ -i\cdot \gamma \cdot A \}$}
\Procedure{CTRWAnomalyScore}{$U,t,w,A_r$}
\State{$\displaystyle \ket{\vec{P}_w} \gets \frac{1}{\sqrt{2^n}}\sum_{i=0}^{2^n-1} \ket{i}$}
\For{$i \gets 1, t$}
    \State{$\ket{\psi_{0}} \gets \ket{\vec{P}_w}$}
    \State{$k \gets i$}
    \While{$k > 0$}
        \If{$\mbox{floor}(k/w) > 0$}
        \State{$k' \gets w$}
        \Else
            \State{$k' \gets k \mod w$}
        \EndIf
        \State{$\ket{\psi_{i'}} \gets U^{k'} \ket{\psi_{i'}}$}
        \State{$k \gets k - k'$}
    \EndWhile
    \State{$\vec{P}_w \gets \mbox{frequencies}( \ket{\psi_{i}})$}
    \Comment{probability vector}
    \State{$\displaystyle \ket{\vec{P}_w} \gets \sum_{i=0}^{2^n-1} \sqrt{ \vec{P}_w[i] } \cdot \ket{i}$}
    \State{$A_r[i] \gets \vec{P}_w$} 
    \Comment{append the $i^{th}$ row}
\EndFor
\State{ $\displaystyle \vec{P}_t \gets \frac{1}{t} \sum_{i=0}^{t-1} A_r[i]$ }
\State{\textbf{return} $\vec{P}_T^{-1}$ } \Comment{inverse is applied entry-wise}
\EndProcedure
\end{algorithmic}
\caption{Symmetric CTRW Anomaly Score}\label{alg:ctw}
\end{algorithm}

\begin{re}
While discrete and continous quantum walks have been shown to yield a theoretical speed-up over the classical counterpart under certain conditions \cite{szegedy2004quantum,varsamis2022hitting, PhysRevLett.129.160502}, the actual implementation on hardware devices can be challenging even with the augmentation of the algorithm with an intermediate data loading technique. However, it is expected that such a theoretical speedup might be eventually realized with fault-tolerant quantum computers \cite{freedman2014faulttolerantquantumwalks}.
\end{re}

Observe that the basis for Algorithm \ref{alg:ctw} yields similar form to Equation \ref{eq:avg-prob}. The adjustment stems from the need of crafting a circuit with reasonable depth. However, as we will show, the two are equivalent in distribution. Lastly, starting the CTRW with every node in equal superposition is necessary as we assume there is no knowledge a priori.

\begin{claim}\label{claim:algo}
Equation \ref{eq:avg-prob} and the finally probability given in Algorithm \ref{alg:ctw} are equivalent in distribution. 
\end{claim}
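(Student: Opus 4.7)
The plan is to match the algorithm's time-averaged output with the Ces\`aro sum in Equation \ref{eq:avg-prob}, so that the convergence of the latter to the limiting distribution $\pi(\cdot|\psi_0)$ supplied by Proposition \ref{prop:avg-prob} transfers to the algorithm. I would first argue that the inner while loop, which splits $U^i$ into chunks of size at most $w$, leaves the final measurement distribution unchanged relative to applying $U^i$ in one block. This follows immediately from the group law $U^{k_1}U^{k_2}\cdots U^{k_m} = U^{k_1+\cdots+k_m}$ for unitary $U$ whenever no measurement is taken between chunks; the chunking is a circuit-depth accommodation only, so each outer step $i$ can be analyzed as a single unitary application $U^i$ followed by a computational-basis measurement.

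The bulk of the argument concerns the between-step reloading $\ket{\vec{P}_w} \gets \sum_\phi \sqrt{\vec{P}_{i-1}[\phi]}\,\ket{\phi}$. I would spectrally decompose the walk operator as $U = \sum_k e^{-i\gamma\mu_k}\ket{v_k}\bra{v_k}$, which is unitary because $A$ is Hermitian, and expand each measured probability $\vec{P}_i[\phi] = |\langle\phi|U^i|\psi_0^{(i)}\rangle|^2$ as a bilinear form in the overlaps $\langle v_k|\psi_0^{(i)}\rangle$. A Ces\`aro argument modeled on the proof of Theorem 3.4 in Aharonov et al.\ then shows that, averaged across $i=1,\ldots,t$, the off-diagonal spectral terms with $\mu_k \neq \mu_l$ vanish as $t\to\infty$, leaving only the diagonal ensemble $\sum_{k,l:\mu_k=\mu_l}\langle\phi|v_k\rangle\langle v_k|\psi_0^{(i)}\rangle\langle\psi_0^{(i)}|v_l\rangle\langle v_l|\phi\rangle$, which is precisely the form appearing in the limiting distribution of Equation \ref{eq:avg-prob}.

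The hard part is justifying that this residual ensemble equals $\pi(\phi|\psi_0)$ for $\ket{\psi_0}$ the uniform superposition, even though $\ket{\psi_0^{(i)}}$ depends nonlinearly on the preceding measurements through the square root in the reloading step. My approach is to model the reloading as a dephasing channel in the computational basis, compose it with the unitary evolution into a completely positive trace-preserving map on the vertex register, and show that its stationary behavior is governed by the projections onto degenerate eigenspaces of $U$. Since the Ces\`aro limit of the iterated evolution depends only on the overlap of the initial $\ket{\psi_0}$ with each such eigenspace, and dephasing in the computational basis cannot alter these coarse-grained overlaps on average, the limit is forced to equal $\pi(\phi|\psi_0)$. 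Once this is established, the reciprocal in the final line of the algorithm is well-defined whenever the graph is connected, and returning the entry-wise inverse preserves the anomaly ranking induced by $\pi(\cdot|\psi_0)$.
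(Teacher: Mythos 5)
Your first paragraph already contains essentially everything the paper's own proof establishes: the paper diagonalizes $U = T^{\dagger}\Lambda T$, writes the expectation after $n$ applications of the walk operator, and then argues by induction on the decomposition $n = n_0 + n_1 + \cdots$ that $\bra{U^{n_0}\psi}\big(U^{n_1}\big)^{\dagger}U^{n_1}\ket{U^{n_0}\psi} = \bra{\psi}(U^{n})^{\dagger}U^{n}\ket{\psi}$, which is exactly your group-law observation that splitting $U^{i}$ into blocks of size at most $w$ changes nothing so long as no measurement intervenes between blocks. On that portion you and the paper coincide.

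The gap is in your third paragraph. The step ``dephasing in the computational basis cannot alter these coarse-grained overlaps on average'' is asserted rather than proved, and it fails in general: the degenerate eigenspaces of $U = e^{-i\gamma A}$ are spanned by eigenvectors of $A$, which for a generic graph are not computational basis states, so replacing a state $\sum_{\phi} c_{\phi}\ket{\phi}$ by $\sum_{\phi}\sqrt{|c_{\phi}|^{2}}\,\ket{\phi}$ --- which is what line 16 of Algorithm \ref{alg:ctw} does, and which is a nonlinear map on pure states rather than a CPTP dephasing channel, so even your channel model needs separate justification --- generically changes the weight of the state on each eigenspace of $U$. Since the Ces\`aro limit in Equation \ref{eq:avg-prob} is precisely a function of those eigenspace weights, the reloaded runs need not average to $\pi(\cdot\,|\,\psi_0)$ for the uniform initial state. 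To be fair, you have identified a real issue: the paper's own proof silently sidesteps it by treating the fed-back state as literally $U^{n_0}\ket{\psi}$ rather than the phase-stripped reconstruction from measured frequencies. But your proposed resolution does not close the hole. A complete argument along your lines would need either to restrict reloading to within a single outer step and show the discarded phases are irrelevant to the final marginal there, or to prove a genuine invariance statement for the composed evolve--measure--reload map; the dephasing heuristic supplies neither.
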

\begin{proof}
Given the Markovian nature of the operator, it is sufficient to show the equivalence holds for an arbitrary positive integers $t$ and $w$, initial condition $\ket{\psi}$.

The Spectral Decomposition theorem (see Theorem 2.1 in \cite{nielsen2010quantum}) yields that $U$ is diagonalizable, ergo, $U = T^{\dag} \Lambda T$ where $T$ is unitary where each column a basis vector of an orthonormal basis and $\Lambda$ is a diagonal matrix. It is straightforward to derive the equality $U^n = T^{\dag} \Lambda^n T$. Thus, the frequency to be in the a basis state after apply this operator is 
\begin{equation}\label{eq:expt}
\begin{split}
 & \bra{\psi}(U^n)^{\dag} U^n \ket{\psi} = \bra{\psi} (T^{\dag} \Lambda^n T)^{\dag} T^{\dag} \Lambda^n T \ket{\psi} \\ 
 & = \bra{\psi T} \Lambda^{2n}\ket{T\psi} = \sum_{i} \bra{\psi T}\Big( \lambda^{2n} \ket{i}\bra{i} \Big) \ket{T\psi} \\
 & = \sum_{i} \lambda_i^{2n} |\braket{i}{T\psi}|^2.    
\end{split} 
\end{equation}

Proof by induction will show equality to Equation \ref{eq:expt}. For the case $n=n_0+n_1$ for $n_0, n_1 \in \mathbb{N} \setminus \{0\}$ we apply $U^{n_0}$ to the initial condition but not simplify with respect to Equation \ref{eq:expt}, which yields $\ket{U^{n_0}\psi}$.

Now, we apply the operator $U^{n_1}$ to the state $\ket{U^{n_0}\psi}$ and take expectations, which gives

\begin{equation*}
    \begin{split}
     & \bra{U^{n_0}\psi} \big(U^{n_1} \big)^{\dagger}  U^{n_1} \ket{U^{n_0}\psi} \\ 
     & = \bra{\psi} \big( U^{n_0} \big)^{\dagger} \big(U^{n_1}\big)^{\dagger}  U^{n_1}U^{n_0} \ket{\psi} \\
     & = \bra{\psi} (U^{n})^{\dagger}  U^{n}\ket{\psi} \\
     & = \sum_{i} \lambda_i^{2n} |\braket{i}{T\psi}|^2,
    \end{split}
\end{equation*}
with the last equality following the logic in Equation \ref{eq:expt}. The case when we assume $\displaystyle \sum_{i=0}^{t-1}n_i$ holds follows exactly as above.
\end{proof}

For the question on the size of the time-step $\gamma$, we will display that if $\gamma$ is sufficiently small then any value below the threshold can be used with an extremely similar convergence. To show this set a threshold $\epsilon>0$, $\gamma_1<\gamma_0<\epsilon/2$, and an arbitrary $t$ and Hermitian matrix $B$. 
\begin{equation}\label{eq:gamma-diff}
    \begin{split}
    & \mathfrak{N}\Big( H_B(\gamma_0,t) - H_B(\gamma_1,t) \Big)  = \mathfrak{N}\left( \sum_{n=1}^{\infty} \frac{(-it)^nB^n(\gamma_0^n - \gamma_1^n)}{n!} \right) \\
    & = \mathfrak{N}\left( \sum_{n=1}^{\infty} \frac{(-it)^nB^n (\gamma_0 + \gamma_1)^n}{n!}\cdot \frac{\gamma_0 - \gamma_1}{\gamma_0 + \gamma_1} \right) \\
    & = \mathfrak{N}\left( \sum_{n=1}^{\infty} \frac{(-it)^nB^n (\gamma_0 + \gamma_1)^n}{n!}\cdot \frac{\gamma_0 - \gamma_1}{\gamma_0 + \gamma_1} \right) \\
    & = \mathfrak{N}\left( \frac{\gamma_0 - \gamma_1}{\gamma_0 + \gamma_1} (-it)B(\gamma_0 + \gamma_1) \sum_{k=0}^{\infty} \frac{(-it)^kB^k (\gamma_0 + \gamma_1)^k}{(k+1)!}  \right) \\
    & = \mathfrak{N}\left( \frac{\gamma_0 - \gamma_1}{\gamma_0 + \gamma_1} (-it)B(\gamma_0 + \gamma_1) \right. \\
    & \hspace{25pt} \left. \cdot \sum_{k=0}^{\infty} \frac{(-it)^kB^k (\gamma_0 + \gamma_1)^k}{(k)!}\cdot \frac{1}{(k+1)^{1/k}}  \right) \\
    & \leq t(\gamma_0 - \gamma_1)\cdot\mathfrak{N}\left( B \right) \mathfrak{N}\left( H_B(\epsilon,t)  \right).
    \end{split}
\end{equation}
Therefore, the norm of the matrix $B$ dictates the size of $\epsilon$, which is intuitive since the larger the norm of $B$ the more dynamic the energy flow of the graph.  

\begin{re}
Observe that the adjacency matrix is not necessarily unique in assuring the convergence of the algorithm, any symmetric matrix the captures the dynamic of nodes and edges of the of a graph can be implemented in Algorithm \ref{alg:ctw}. As noted in Childs et al. \cite{childs2002example}, as long as the Hermitian matrix has structure, the Hamiltonian represents a random walk. However, as will be argued and shown numerically, the adjacency matrix naturally captures structure of the graph while not creating bias. 
\end{re}

\subsection{Asymmetric Graph Anomaly Scoring}\label{subsec:asym-score}
For the case of a directed graphs, if the adjacency matrix is symmetric, as is possible with cyclic graph, then Algorithm \ref{alg:ctw} may be applied. Note that if the adjacency matrix would be symmetric if not for the weights (ergo $e(i,j), e(j,i)\in E$ but $e(i,j) \neq e(j,i)$) then take the average and replace this new value. Ergo, create apply the adjusted adjacency matrix $\Tilde{A} = \Big( (A+A^T) + (A+A^T)^T\Big)/2$. The average encompasses the energy flow between respective nodes.

This logic can be extended to any adjacency matrix of a directed graph by leveraging the derivation of the Hermitian-adjacency matrix \cite{guo2017hermitian,mohar2020new,kubota2021quantum,yuan2022hermitian,chaves2023and}. Hermitian-adjacency matrix were derived by Gou and Mohar \cite{guo2017hermitian}, with the idea of not only creating a Hermitian out of an asymmetric real-valued matrix, but further captures the deeper dynamic of an unweighted directed graph by including central nodes that only have out-going edges. However, the adjustment does not directly give missing edges largely impactful weights, but are replaced with either the imaginary number $i$ or with $-i$, with respect to if the node is the \textit{direct successor} (edge flowing to) or the \textit{direct predecessor} (edge flowing from). 

We take the derivation given by Mohar \cite{mohar2020new} to calculate the Hermitian-adjacency matrix. Define $\alpha = a + bi$ where $|\alpha|=1$ and $a \geq 0$. The Hermitian adjacency matrix $A^{\alpha}$ has the entries $a^{\alpha} = $ if 
\begin{eqnarray}\label{eq:herm-adj}
a^{\alpha}_{ij} = \left\{
\begin{array}{ll}
      a_{ij}   & \mbox{if } a_{ij} = a_{ji}  \\
       a_{ij}\alpha + a_{ji}\overline{\alpha}  & \mbox{otherwise}
\end{array} 
\right.
\end{eqnarray}

Define the function HermAdj$(A,\alpha)$ to be the function that maps an adjacency matrix $A$ to $A^{\alpha}$ with entries map exactly as in Equation \ref{eq:herm-adj}. We are now ready to state the algorithm. 

\begin{algorithm}
\textbf{Input:} Asymmetric Adjacency matrix $A$ of size $2^n \times 2^n$, complex number to calculate Hermitian matrix $\alpha$, time-step $\gamma$, number of steps $t$, number of walks $w$, and empty array $A_r$ of size $t \times 2^n$
\begin{algorithmic}[1]
\State{$A^{\alpha} \gets$ HermAdj$(A,\alpha)$}
\State{$U \gets \exp\{ -i\cdot \gamma \cdot A^{\alpha} \}$}
\State{\textbf{return} CTRWAnomalyScore$(U,t,w,A_r)$}
\end{algorithmic}
\caption{Asymmetric CTRW Anomaly Score}\label{alg:ctw-asym}
\end{algorithm}

\section{Discrete Time Quantum Walk}\label{sec:coin}
We now consider discrete time quantum random walks on graphs with respect to Algorithm \ref{alg:ctw} and show little work is required to exchange this subclass of algorithms into the previous arguments. Unlike continuous quantum random walks, discrete time quantum random walks is a diverse subfield with many algorithms. See Kadian et al. \cite{kadian2021quantum} for a comprehensive review of discrete and continuous time quantum walks, and Childs \cite{childs2010relationship} and Kendon \cite{kendon2006quantum} for a detailed description of the differences between the two methods. 

Discrete time quantum walks have four essential forms: Szegedy \cite{szegedy2004quantum}, which rigorously derived a quantum algorithm based Markov chain walk; coined walk \cite{aharonov2001quantum} that splits the circuit into a coin space and state space, with the coin space dictating the next step of the walk; staggered quantum walks \cite{portugal2016staggered} which is a subclass of coined quantum walks and was derived to traverse a graph via partitions; and scattering \cite{kendon2007decoherence} that emphasized the important effect of decoherence. Given the large intersection of the four subclasses of discrete time quantum walk algorithms, our focus will be on the generalized coined walk.

A coined quantum walk consists of a state space $\mathcal{H}_p$, and coined space $\mathcal{H}_c$, and the quantum algorithm spans the space $\mathcal{H}_p \otimes \mathcal{H}_c$. With this algorithm there are two overarching operators, the \textbf{coin operator} and \textbf{shift operator}. The coin operator determines the next direction of the walk and has the form $C = \displaystyle \sum_{i} \ket{i}\bra{i} \otimes C_i$, where $\ket{i}$ is a state in $\mathcal{H}_p$ and $C_i$ is the local operator representing all adjacent vertices of $i$ and is typically the Grover diffusion coin operator \cite{moore2002quantum}. This operator is denoted as inhomogeneous and notice that if $C' := C_i = C_j$ for all $i,j$, then $C= \mathbbm{1}_p \otimes C'$ and is denoted as homogeneous. 

The shift operator works off of the current state and the output of the coin operator to then calculate the next state of the walk. The general the shift operator is local and incorporates the nodes of a vertex but does not affect the coin space. With the degree of node $i$ denoted as $d_i$ and $v(i,j)$ denoted as the neighbor of $i$ connected with the $j^{th}$ edge, the shift operator has the form $\displaystyle S = \sum_{i}\sum_{j=1}^{d_i} \ket{v(i,j)}\bra{i} \otimes \ket{j}\bra{j}$. 

The final operator for the discrete time walk is $U = SC$ and if $U$ is unitary then Aharonov et al. \cite{aharonov2001quantum} showed the same results in the Mixing Subsection \ref{subsec:mix}, and Claim \ref{claim:algo} also holds. From this basis, one may apply Algorithm \ref{alg:ctw} in a similar manner, except only measuring the quantum register representing the state space $\mathcal{H}_p$.   

\subsection{Weighted Graphs}\label{subsec:disc-weighted}
The derivation of the discrete time quantum walk operator $U=SC$ was given for an unweighted graph or an unweighted \textbf{reversible} directed graph. The term reversible refers to adjacent vertices $(v_i,v_j)$ which if there is an edge from $v_i$ to $v_j$ then there is an edge from $v_j$ to $v_i$; see Montanaro \cite{montanaro2005quantum} for further information. 

For a weighted graph we follow the method in Wong \cite{wong2017coined}, where Wong uses the foundation of Szegedy \cite{szegedy2004quantum}. For $A$ the adjacency matrix and node $v_i$, defining
\begin{equation}
    \ket{s_{v_i}} = \frac{1}{ \sqrt{\sum_j a_{ij}} }  \sum_j \sqrt{a_{ij}} \ket{j}
\end{equation}
then 
\begin{equation}
    C_i = 2\ket{s_{v_i}}\bra{s_{v_i}} - \mathbbm{1}_c.  
\end{equation}

\onecolumngrid

\begin{figure}[!ht]
    \begin{subfigure}[b]{0.45\textwidth}
    \includegraphics[width=.95\textwidth]{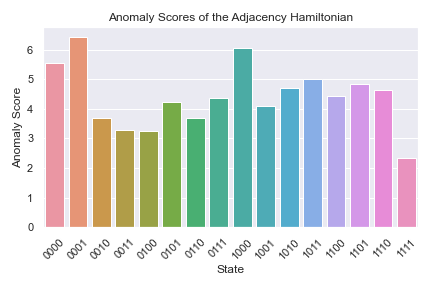}\caption{Adjacency matrix Hamiltonian anomaly scores for every node.}\label{subfig:anom-adj}
     \end{subfigure}
     \hfill
     \begin{subfigure}[b]{0.45\textwidth}
    \includegraphics[width=.95\textwidth]{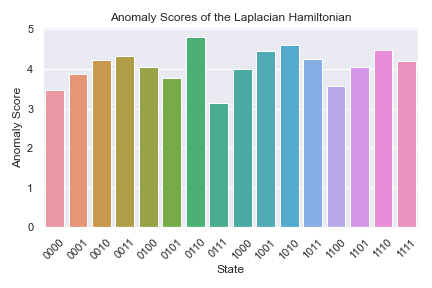}\caption{Laplacian matrix Hamiltonian anomaly scores for every node.} \label{subfig:anom-lap}
     \end{subfigure}
     \hfill
     \begin{subfigure}[b]{0.45\textwidth}
    \includegraphics[width=.95\textwidth]{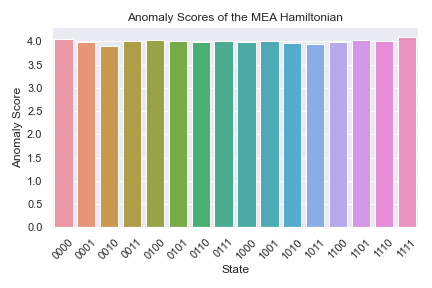}\caption{MEA matrix Hamiltonian anomaly scores for every node.}\label{subfig:anom-mea}
     \end{subfigure}
     \hfill
     \begin{subfigure}[b]{0.45\textwidth}
    \includegraphics[width=.95\textwidth]{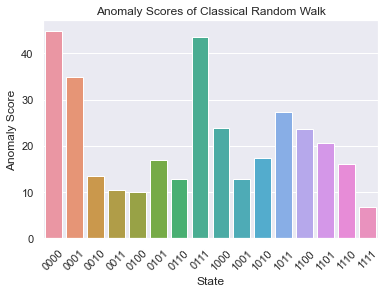}\caption{ Classical random walk anomaly scores for every node.}\label{subfig:anom-clsscl}
     \end{subfigure}
\caption{Anomaly scores for the adjacency, Laplacian, and MEA matrices, and a classical random walk. The node labels are converted to binary. the for the classical walk the damping term has value 0 and there is an error threshold of .05.}
\label{fig:anom-scores}
\end{figure}

\twocolumngrid

In general, there are graph instances where the walk operator $U= S(\mathbbm{1} \otimes C)$ is not unitary. With this case, if $U$ is Hermitian then there are method to transform the operator to be unitary \cite{daskin2017ancilla,schlimgen2022quantum}. Schlimgen et al. \cite{schlimgen2022quantum} display a straightforward implementation with the new operator
\begin{equation}
    \hat{U} = \left( \begin{array}{cc}
        U & \sqrt{\mathbbm{1} - U U^{\dagger}}  \\
        \sqrt{\mathbbm{1} - U U^{\dagger}} & U^{\dagger}
    \end{array} \right).
\end{equation}
The output from the circuit with $\hat{U}$ to the input vector $\ket{0} \otimes \ket{in}$, where $\ket{in}$ is the desired initial state, has the form
\begin{equation}
    \hat{U}\ket{0} \otimes \ket{in} = \left( \begin{array}{c}
        U \ket{in}\\
        \sqrt{\mathbbm{1} - U U^{\dagger}} \ket{in} 
    \end{array} \right).
\end{equation}
After applying the operator $\ket{0} \bra{0}\otimes \mathbbm{1}$ classically to the output vector collapses to the state $U\ket{in}$ \cite{daskin2017ancilla}.

\section{Experiments}\label{sec:diff-ham}

\subsection{Symmetric Graph with Different Hamiltonians and Classical Random Walk}

\begin{figure}[!ht]
    \centering
    \includegraphics[width=270px]{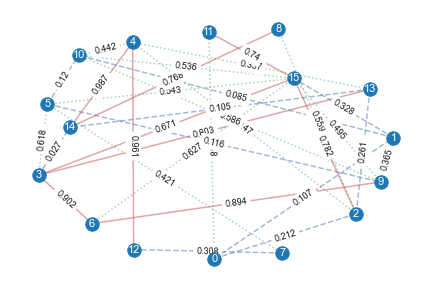}
    \caption{The undirected graph that is randomly generated with weights randomly generated and are within the unit interval. To assist in visualizing the graph, the edges of the vertices are split into three groups, values less than $1/3$ are identified with the blue dashed edge, values greater than $1/3$ and less than $2/3$ are identified by the green dotted edges, and values greater than $2/3$ are identified by the red solid line.  }
    \label{fig:undir-grph}
\end{figure}

To display the importance of using the adjacency matrix as the basis for Algorithm \ref{alg:ctw} we create a random weighted graph and run the algorithm with the Hamiltonian created from the adjacency, Laplacian, and MEA matrices. 

The undirected graph was randomly created with the Networkx \cite{hagberg2008exploring} Python package, and the weights of the edges were randomly generated with the Numpy \cite{harris2020array} Python package using the uniform distribution method with the interval $(0,1)$. A visualization of the graph is given in Figure \ref{fig:undir-grph}.

For the experiments the Qiskit Python package was used for the quantum circuits with the parameters of steps $t=40$, number of walks $w=4$, and time-step $\gamma = 1/(2\cdot\sqrt{13})$. For each step in the algorithm the circuit was sampled with $30000$ shots to approximate the expected outcome of the circuit. 

The Grover-Rudolph procedure \cite{grover2002creating} was used for each restart of the iteration of the walks. Interestingly, the number of gates for this procedure is considerable less than the walk operator, hinting at a small increase to the complex of a quantum walk, keeping the quantum advantage in certain classes of graphs. Particularly, the walk operator had a total number of 492 gates ($CX$-242, $R_Z$-84,$U_2$-74, $H$-64, $R_{zx}$-12,$R_{XX}$-6,$R_{YY}$-6, and $R_X$-4), and the Grover-Rudolph procedure had a total of 161 gates ($CX$-56, $CU$-70,$X$-34,and $R_Y$-1). Of course, in applications, the native gates of the processor will dictate these numbers.

The classical random walk for anomaly scoring similarly follows the algorithm Moonesinghe and Tan \cite{moonesinghe2006outlier}, with the variation of using the probability matrix $AD^{-1}$ instead of the normalized similarity-matrix, and there is the extra step of normalizing the vector with the recursive scores. Moonesinghe and Tan's algorithm utilizes the invariance of the Markov process, where the invariant probability measure $\pi$ with the recursion $\displaystyle \pi^{t+1} = \frac{ P^T \pi^{t} }{ || P^T \pi^{t} ||_1}$, where $||\cdot||_1$ is the $L^1$ norm. The damping term, $d$, is an adjustment to $\pi^{t+1}$,$\displaystyle \tilde{\pi}^{t+1} = \frac{d + (1-d) P^T \tilde{\pi}^{t} }{|| d + (1-d) P^T \tilde{\pi}^{t} ||_1}$, in order to force convergence. While not necessary for the experiments, from the calculations, the dampening term was shown to affect the scores, shrinking the distance between all of the scores.

\begin{table}[ht]
\centering
\begin{tabular}{|l | l | l | l |l |} \hline
Hermitian Matrices & Adjacency & Laplacian & MEA & Classical \\
\hline
Adjacency & .0    & .1840  & .1364 & .0780 \\ \hline
Laplacian & .1840  & .0    & .0241 & .3056 \\ \hline
MEA       & .1364 & .0241 & .0    & .1404 \\ \hline
Classical & .0780 & .3056 & .1404 & .0 \\\hline
\end{tabular}
\caption{ Symmetric KL differences between each of the scores.} 
\label{tab:kl}
\end{table}

Figure \ref{subfig:anom-adj} gives the anomaly scores with the Hamiltonian generated with the adjacency matrix. The states of $0000$, $0001$, and $1000$ have the highest scores. The nodes $0000$ and $0001$ are intuitive since these states are, at best, weakly connected to main graph since the nodes these vertices are connected to have a mid-range anomaly scores. While the node $1000$ is connected to a main node, $0010$, this connection is weak. The Hamiltonian generated with the adjacency matrix and the classical random walk, as noted from the histogram figures in Figure \ref{subfig:anom-adj} and symmetric KL scores in Table \ref{tab:kl}, are extremely close, but differ with the scores of nodes $1000$ and $0111$. The close symmetric KL scores of the final probabilities of the quantum random walks, with the largest score $.1840$, indicate that the probabilities of the events (nodes) are relatively. Moreover, while this shows that while the evolution of each Hamiltonian are similar, the simple act of taking the inverse of each state yields different behavior.

For completeness, taking two probability distribution $P$ and $Q$ over state space $\mathcal{X}$, KL divergence $\mbox{KL}(\cdot || \cdot )$  is defined as $\displaystyle \mbox{KL}( P || Q ) := \sum_{x \in \mathcal{X}} P(x) \log \left( \frac{P(x)}{Q(x)}\right)$, and symmetric KL is defined 
$$
\mbox{symKL}(P,Q) := \frac{KL( P || Q ) + KL( Q || P )}{2}.
$$ 

\begin{figure}[!ht]
     \centering
     \begin{subfigure}[b]{0.4\textwidth}
    \centering
    \includegraphics[width=1.\textwidth]{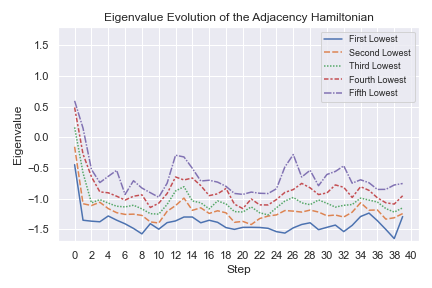}
     \end{subfigure}
     \hfill
     \begin{subfigure}[b]{0.4\textwidth}\label{fig:eng-adj}
    \centering
    \includegraphics[width=1.\textwidth]{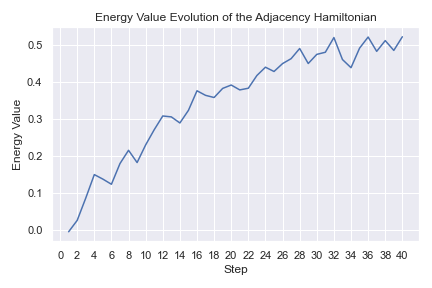}
     \end{subfigure}
\caption{Eigenvalues and total energy evolution for the adjacency Hamiltonian.}
\label{fig:behv-adj}
\end{figure}

Contrary to outliers, nodes $0100$ and $1111$ are clearly highly-connected nodes since they have an above average number of edges and each edge has a medium or large weight. These observations are captured in the scores of these two nodes with the adjacency matrix Hamiltonian. However, the scores of the nodes $0100$ and $1111$ are ambiguous with the scores generated with the Laplacian Hamiltonian. The scores generated with the MEA Hamiltonian are all ambiguous. Considering the derivation of the MEA matrix with the purpose of traversing the graph with bias, the Laplacian in the perspective of Hamiltonian oracles \cite{mochon2007hamiltonian,wong2020unstructured}, and starting the algorithm with each state in uniform superposition, the outlier scores from each Hamiltonian is intuitive since every vertex is consistently visited.

The anomaly scores are given in Figure \ref{fig:anom-scores}, and while scores of some nodes are similar (e.g., the states $1000$ and $1011$ ) there are nodes with scores that completely differ (e.g., $0000$, $0001$, $0010$, and $1111$ ). Interestingly, using symmetric KL divergence as the pseud-metric to compare, and displayed in Table \ref{tab:kl},  

\begin{figure}[!ht]
     \centering
     \begin{subfigure}[b]{0.4\textwidth}
    \centering
    \includegraphics[width=1.\textwidth]{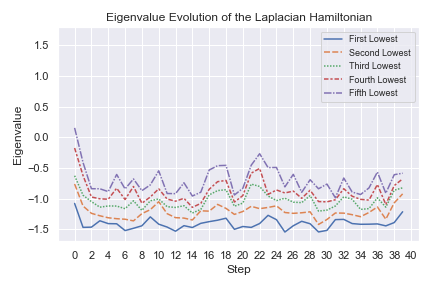}
     \end{subfigure}
     \hfill
     \begin{subfigure}[b]{0.4\textwidth}
    \centering
    \includegraphics[width=1.\textwidth]{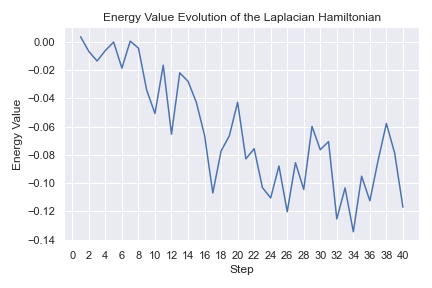}
     \end{subfigure}
\caption{Eigenvalues and total energy evolution for the Laplacian Hamiltonian.}
\label{fig:behv-lap}
\end{figure}

To further examine the behavior of each Hamiltonian at every step we calculated the five lowest eigenvalues and calculated the total energy (with respect to spin). Figure \ref{fig:behv-adj} displays the characteristics of the adjacency Hamiltonian, Figure \ref{fig:behv-lap} displays the characteristics of the Laplacian Hamiltonian, and Figure \ref{fig:behv-mea} displays the characteristics of the MEA Hamiltonian.

\begin{figure}[!ht]
     \centering
     \begin{subfigure}[b]{0.4\textwidth}
    \centering
    \includegraphics[width=1.\textwidth]{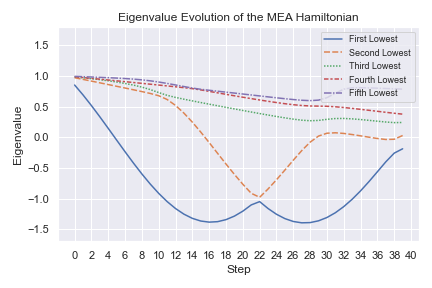}
     \end{subfigure}
     \hfill
     \begin{subfigure}[b]{0.4\textwidth}
    \centering
    \includegraphics[width=1.\textwidth]{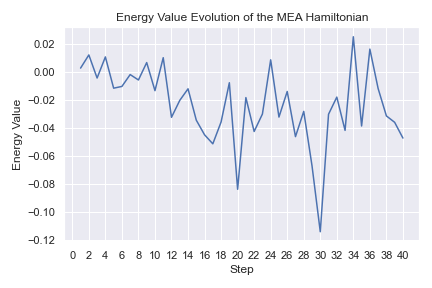}
     \end{subfigure}
\caption{Eigenvalues and total energy evolution for the MEA Hamiltonian.}
\label{fig:behv-mea}
\end{figure}

Observe that the eigenvalues for the adjacency Hamiltonian in have consistent small gaps between sequential eigenvalues. These small gaps indicate a high-entropy, which is reflected in energy of the system. The Laplacian Hamiltonian has, for the majority of the steps, larger gaps between sequential eigenvalues, and this bias is reflected in the total energy. Lastly, MEA Hamiltonian has a consistent tremendous gap between the lowest and second lowest eigenvalue, displaying the bias of the walk; this bias is further emphasized in the anomaly scores in Figure \ref{subfig:anom-mea}. Interestingly, this gap closes as time increases then immediately increases. In all, the consistent increase in energy to an upper bound of the system dictated by the adjacency Hamiltonian consistently, as backed by the gap between the eigenvalues, yields the desired behavior of traversing the graph in a more unbiased manner.

\begin{figure}[!ht]
     \begin{subfigure}[b]{0.4\textwidth}
        \includegraphics[width=250px]{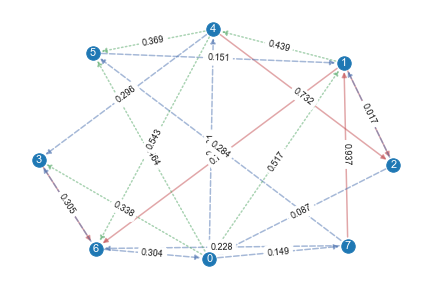}
    \caption{Directed graph with no apparent outlier vertices.}
    \label{subfig:dir-grph}
     \end{subfigure}
     \hfill
     \begin{subfigure}[b]{0.4\textwidth}
    \includegraphics[width=230px]{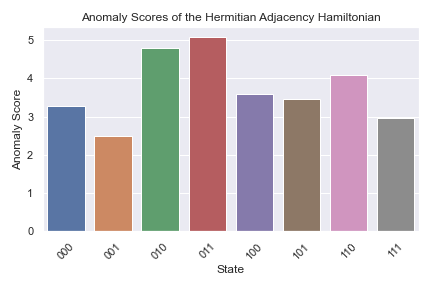}
    \caption{Anomaly scores from the algorithm for the directed graph above.}
    \label{subfig:anom-h-adj}
     \end{subfigure}
\caption{Example of a directed graph without an outlier node. The directed graph that is randomly generated with weights randomly generated and are within the unit interval. The edges of the vertices are denoted exactly as in Figure \ref{fig:undir-grph}. This graph has a high probability of having no high-value outlier nodes. }
\label{fig:dirgraph-no-anom}
\end{figure}

\begin{figure}[!ht]
     \begin{subfigure}[b]{0.4\textwidth}
        \includegraphics[width=230px]{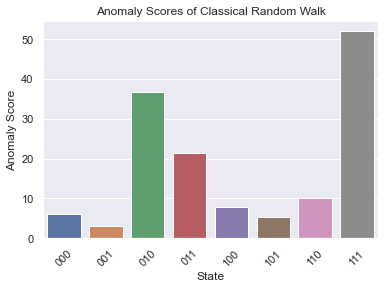}
    \caption{Anomaly scores from a random walk with damp term of 0 and error term of .05.}
    \label{fig:dir-grph}
     \end{subfigure}
     \centering
     \begin{subfigure}[b]{0.4\textwidth}
    \includegraphics[width=230px]{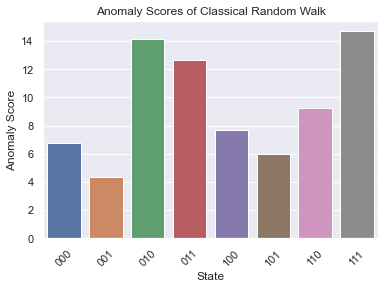}
    \caption{Anomaly scores from a random walk with damp term of .1 and error term of .05}
    \label{fig:anom-h-adj}
     \end{subfigure}
\caption{ The two bar graphs are the anomaly scores for the graph in Figure \ref{subfig:dir-grph}, with the same error of .05 but different dampening terms. The effect of the dampening term on the scores is evident, with apparent convergence to uniformity.} 
\label{fig:dirgraph-no-anom-clsscl}
\end{figure}

\begin{figure}[!ht]
     \centering
    \begin{subfigure}[b]{0.4\textwidth}
    \includegraphics[width=250px]{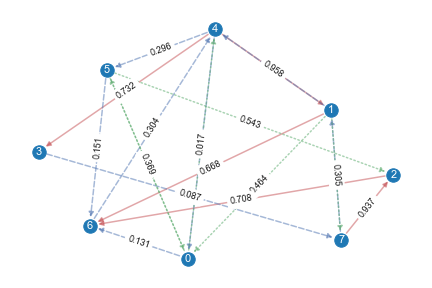}
    \caption{Directed graph with apparent outlier vertices.}
    \label{subfig:dir-grph-outlier}
     \end{subfigure}
     \centering
     \begin{subfigure}[b]{0.4\textwidth}
    \includegraphics[width=230px]{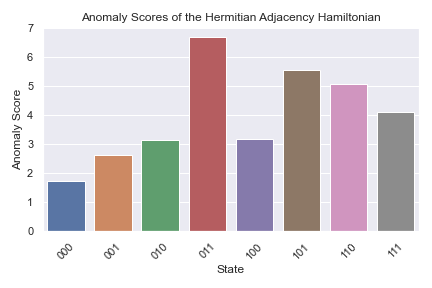}
    \caption{Anomaly scores for the directed graph above.}
    \label{subfig:asas_h-adj_anom}
     \end{subfigure}
\caption{Example of a directed graph with an outlier node. The directed graph that is randomly generated with weights randomly generated and are within the unit interval. The edges of the vertices are denoted exactly as in Figure \ref{fig:undir-grph}. This graph has a high probability of having no high-value outlier nodes. }
\label{fig:dirgraph-anom}
\end{figure}

\begin{figure}[!ht]
     \centering
     \begin{subfigure}[b]{0.4\textwidth}
    \includegraphics[width=230px]{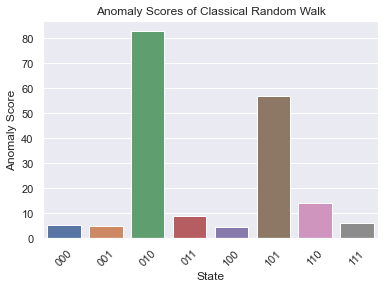}
    \caption{Anomaly scores from a random walk with damp term of 0 and error term of .05.}
    \label{subfig:asas_h-clsscl-0}
     \end{subfigure}
     \centering
     \begin{subfigure}[b]{0.4\textwidth}
    \includegraphics[width=230px]{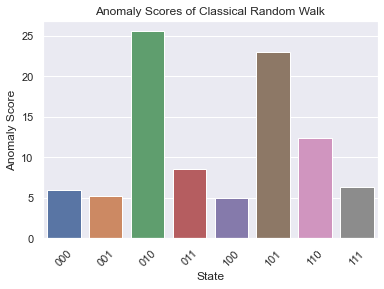}
    \caption{Anomaly scores from a random walk with damp term of .05 and error term of .05.}
    \label{subfig:asas_h-clsscl-5}
     \end{subfigure}
\caption{ The two bar graphs are the anomaly scores for the graph in Figure \ref{subfig:dir-grph-outlier}, with the same error of .05 but different dampening terms. The effect of the dampening term on the scores is evident, as shown in Figure \ref{fig:dirgraph-no-anom-clsscl}.} 
\label{fig:dirgraph-anom-clsscl}
\end{figure}

\subsection{Asymmetric Graphs with Adjacency Hamiltonian and Classical Random Walks}

To further illuminate this technique, this section has two small toy directed graphs, one with no clear outlier node and another with clear outlier nodes. Given the adjustment of the adjacency matrix via the Hermitian matrices transformation, Equation \ref{eq:herm-adj}, one would expected discrepancies between scores from Algorithm \ref{alg:ctw-asym} and classical random walks. In fact, this is observed. However, the dampening term may be leveraged to counter the bias in the classical walk, as shown in Figure \ref{fig:dirgraph-no-anom-clsscl} and Figure \ref{fig:dirgraph-anom-clsscl}.

For the first graph, consider the directed acyclic graph in Figure \ref{fig:dir-grph}. For the parameters, take $\alpha=i$, steps $t=20$, no walks, time-step $\gamma = 1/(2\cdot\sqrt{13})$, and for each step in the algorithm the circuit was ran $10000$ times. 

The outlier scores, given in Figure \ref{fig:anom-h-adj}, identify nodes $010$ and $011$ as fairly likely to be anomalous. However, given the small difference between the largest score and the smallest score, it is a safer assessment that none of the vertices are outliers. However, from the asymmetry of the graph, there is an unsurprising discrepancy between the scores. Particularly, while Algorithm \ref{alg:ctw-asym} yields scores close to uniformity, the classical walk with a zero dampening term, given in Figure \ref{subfig:asas_h-clsscl-0}, displays scores far less uniform, with nodes $010$ and $111$ identified as outliers. When a dampening term of $.1$ is included in the classical random walk, the scores are similar to Algorithm \ref{alg:ctw-asym}.

Now, consider the graph in Figure \ref{subfig:dir-grph-outlier}. From the visual inspection, it appears that nodes $010$, $011$, and $111$ are outliers. This intuition, however, is only supported for the node $011$, as indicated from the outlier scores in Figure \ref{fig:anom-h-adj}. Further inspection displays that node $101$ is connected to the heavily connected node $100$ and node $111$, where are in a four node loop. However, since node $111$ is not well-connected, one may see that node $011$ will not be visited often, that is, with the Hermitian adjacency matrix. Thus, the algorithm's determination is intuitive. Contrarily, if the graph is taken as is, since node $100$ is well-connected with only two outgoing edges, where the edge to $011$ has a large weight, this node will be visited often. This reasoning is displayed in the anomaly scores for the classical random walks given in Figure \ref{fig:dirgraph-anom-clsscl}. Furthermore, the classical random walks also identified nodes $010$ and $101$ as potential outliers, which makes superficial sense since both nodes only have two incoming edges, one of which is $111$ to $010$. These scores are quite large when the dampening value is $0$, and with a dampening value of $.05$, the anomaly scores for $010$ and $101$ are significantly decreased. However, since the anomaly score for $101$ is relatively large for all algorithms, this node has to be considered as an outlier.

\section{Discussion}\label{sec:discussion}

\subsection{Summary}

This paper derived an algorithm to calculate a score for each vertex in a graph. The algorithm was crafted to adjust for the NISQ, and subsequent fault-tolerant era, with the potential to negate the quantum advantage. However, the algorithm is easily adjusted for QPUs with high fidelity and scalable to the data. The evolution of the process within the algorithm was first given for a matrix representation of a graph that is Hermitian, then an explicit mitigation to the algorithm was derived when this representative matrix is not Hermitian. It was argued that that the adjacency matrix should be the representative matrix to utilize for traversing the graph as it balances variance and bias by capturing the not-too-biased entropy of the graph. However, the algorithm is robust enough to consider other representative matrices.

It was rigorously shown that the algorithm converges to the expected probability, where this probability is contingent on the initial condition. The initial condition given for the algorithm is the superposition of equal probabilities for each node, since the more appropriate distribution of the vertices is not known. Lastly, we discussed how to incorporate discrete time quantum walks into the respective algorithms, with a transformation of the walk operator when it is Hermitian and not unitary. 

With toy graphs, it was demonstrated that the algorithm scores nodes similarly to a classical walk on a symmetric graph, and is extremely selective on a directed graph. This selective characteristic of the algorithm has a potential advantage over classical random walks in data mining since the number of identified potential outliers is significantly decreased, resulting in quicker determination of new labeled data points. Moreover, the classical computation does not scale well, while the derived algorithm scales well with the use of a universal quantum processor.

\subsection{Future Work}

A potentially future direction of the research is to fully display the efficacy of the algorithm, a comparison of the algorithm against classical methods against real-world data with known, but unlabeled, outliers. Given the state of all quantum modalities, this data set needs to be small enough to be ran on either simulated or a real quantum processor. In addition, a more rigorous comparison with different graph types and structures will be studied to understand when CTQW will have an exponential/quadratic speedup over their classical counterpart with different implementation of the walkers.

While the authors believe this derivation of the algorithm is encompasses quite well a large number of classes of graphs, there are a few questions about the method where answering these question would warrant future research. The algorithm was discussed with the adjacency matrix, however, are there various classes of graphs when the discrete Laplacian matrix outperforms the adjacency matrix. For directed graphs, are there different transformations of an asymmetric adjacency matrix other than the Hermitian adjacency matrix?

The algorithm derived only concerns simple unweighted/weighted graphs where the edges, nodes, and subgraphs are, essentially, featureless. This observation begs the question on how to naturally incorporate features into the Hamiltonian in the algorithm. At first thought, one may add a symmetric matrices with the respective features to the adjacency matrix and apply the second order Suzuki-Trotter approximation as the Hamiltonian in the algorithm. However, we posit there is a deeper representation of the flow of information without utilizing machine learning compression techniques.

Of last note, the manuscript did not implement a discrete continuous walk for the algorithm since width and length of the circuit grows fairly quickly. Consequently, this restricts the comparison of performance with the algorithm implemented with continuous time quantum random walks and discrete time quantum walks on `interesting' graphs. However, it would be interesting to discover an instance of an interesting graph where discrete time out performs continuous time from either time to convergence or robustness of the scores for the vertices. The authors posit there are classes of graphs where this holds. 

%Finally, a use-case with respective data needs to identified to compare classical methods against the derived method. Performance should be judged on time to convergence, robustness of the scores for the vertices, and how the methods scale.  

\section{Disclaimer}
About Deloitte: Deloitte refers to one or more of Deloitte Touche Tohmatsu Limited (“DTTL”), its global network of member firms, and their related entities (collectively, the “Deloitte organization”). DTTL (also referred to as “Deloitte Global”) and each of its member firms and related entities are legally separate and independent entities, which cannot obligate or bind each other in respect of third parties. DTTL and each DTTL member firm and related entity is liable only for its own acts and omissions, and not those of each other. DTTL does not provide services to clients. Please see www.deloitte.com/about to learn more.

Deloitte is a leading global provider of audit and assurance, consulting, financial advisory, risk advisory, tax and related services. Our global network of member firms and related entities in more than 150 countries and territories (collectively, the “Deloitte organization”) serves four out of five Fortune Global 500® companies. Learn how Deloitte’s
approximately 330,000 people make an impact that matters at www.deloitte.com. 
This communication contains general information only, and none of Deloitte Touche Tohmatsu Limited (“DTTL”), its global network of member firms or their related entities (collectively, the “Deloitte organization”) is, by means of this communication, rendering professional advice or services. Before making any decision or taking any action that
may affect your finances or your business, you should consult a qualified professional adviser. No representations, warranties or undertakings (express or implied) are given as to the accuracy or completeness of the information in this communication, and none of DTTL, its member firms, related entities, employees or agents shall be liable or
responsible for any loss or damage whatsoever arising directly or indirectly in connection with any person relying on this communication. 
Copyright © 2024. For information contact Deloitte Global.

\bibliographystyle{unsrt}
\bibliography{qwbib}
\end{document}